\theoremstyle{plain}
\newtheorem{proposition}{Proposition}
\theoremstyle{definition}
\newtheorem{definition}{Definition}
\newcommand{\C}{{\mathcal C}}
\newcommand{\R}{{\mathcal R}}
\newcommand{\bI}{{\boldsymbol I}}
\newcommand{\bH}{{\boldsymbol H}}
\newcommand{\bC}{{\boldsymbol C}}
\newcommand{\bV}{\boldsymbol{V}}
\newcommand{\bb}{{\boldsymbol b}}
\newcommand{\bu}{{\boldsymbol u}}
\newcommand{\by}{{\boldsymbol y}}
\newcommand{\bx}{{\boldsymbol{x}}}
\newcommand{\bz}{{\boldsymbol{z}}}
\newcommand{\bG}{{\boldsymbol{G}}}
\newcommand{\bbj}{\boldsymbol{\beta}_j}
\newcommand{\bgaj}{\boldsymbol{\gamma}_j}
\newcommand{\bc}{{\boldsymbol c}}
\newcommand{\bcj}{{\boldsymbol c}_j}
\newcommand{\bci}{{\boldsymbol c}_i}
\newcommand{\bcjs}{{\boldsymbol c}_{j^*}}
\newcommand{\bco}{{\boldsymbol c}_1}
\newcommand{\bcn}{{\boldsymbol c}_n}
\newcommand{\bct}{{\boldsymbol c}_2}
\newcommand{\bgi}{{\vec{\boldsymbol{g}}^{(i)}}}
\newcommand{\bgij}{{\vec{\boldsymbol{g}}^{(i)}_j}}
\newcommand{\bgijs}{{\vec{\boldsymbol{g}}^{(i)}_{j^*}}}
\newcommand{\fte}{{\mathbb F}_{256}}
\newcommand{\Ff}{{\mathbb F}_4}
\newcommand{\fst}{{\mathbb F}_{16}}
\newcommand{\ff}{\mathbb{F}}
\newcommand{\ft}{\mathbb{F}_2}
\newcommand{\fq}{\mathbb{F}_q}
\newcommand{\fql}{\mathbb{F}_{q^{\ell}}}
\newcommand{\fqm}{\mathbb{F}_{q^m}}
\newcommand{\bal}{\bm{\alpha}}
\newcommand{\balb}{\overline{\bm{\alpha}}}
\newcommand{\baljs}{\bm{\alpha}_{j^*}}
\newcommand{\balj}{\bm{\alpha}_j}
\newcommand{\bbe}{\bm{\beta}}
\newcommand{\bom}{\bm{\omega}}
\newcommand{\bga}{\bm{\gamma}}
\newcommand{\bla}{\bm{\lambda}}
\newcommand{\spn}{\mathsf{span}_{\fq}}
\newcommand{\rank}{\mathsf{rank}_{\fq}}
\newcommand{\srt}{{\sf{RepairTrace}}}
\newcommand{\sct}{{\sf{ColumnTrace}}}
\renewcommand{\sp}{{\sf{Parity}}}
\newcommand{\vbc}{\vec{\boldsymbol{c}}}
\newcommand{\vc}{\vec{\boldsymbol{c}}}
\newcommand{\vu}{\vec{\boldsymbol{u}}}
\newcommand{\vx}{\vec{\boldsymbol{x}}}
\newcommand{\vy}{\vec{\boldsymbol{y}}}
\newcommand{\vlam}{\vec{\boldsymbol{\lambda}}}
\newcommand{\vga}{\vec{\boldsymbol{\gamma}}}
\newcommand{\dq}{\dim_{\fq}}
\newcommand{\vb}{\vec{b}}
\newcommand{\vbs}{\vec{b}^*}
\newcommand{\mvb}{\max\big(\vec{b}\big)}
\newcommand{\lgr}{\left\{g_i(x)\right\}_{i\in[\ell]}}
\newcommand{\lhr}{\left\{h_i(x)\right\}_{i\in[\ell]}}
\newcommand{\Rg}{\R\left(\{\bgi\}_{i\in [\ell]}\right)}
\newcommand{\rvline}{\hspace*{-\arraycolsep}\vline\hspace*{-\arraycolsep}}
\newcommand{\Cd}{\mathcal{C}^\perp}
\newcommand{\rsk}{\text{RS}(A,k)}
\newcommand{\grskl}{\text{GRS}(A,k,\vec{\boldsymbol{\lambda}})}
\newcommand{\grsnkl}{\text{GRS}(A,n-k,\vec{\boldsymbol{\lambda}})}
\newcommand{\faj}{f(\boldsymbol{\alpha}_j)}
\newcommand{\gaj}{g(\boldsymbol{\alpha}_j)}
\begin{document}

\title{Practical Considerations in\\ Repairing Reed-Solomon Codes
\thanks{
T. X. Dinh, L. J. Mohan, and S. H. Dau are with the School of Computing Technologies, RMIT University, Australia. Emails: \{S3880660, sonhoang.dau\}@rmit.edu.au.  
T. X. Dinh is also with the Department of Mathematics, Faculty of Natural Science and Technology, Tay Nguyen University, Vietnam.
L. Y. N. Nguyen was an intern at RMIT Univesity when this work was done.
Tran Thi Luong is with the Department of Information Security, Academy of Cryptography Techniques, Hanoi, Vietnam. Email: ttluong@bcy.gov.vn.}}

\author{Thi Xinh Dinh$^\dag$, Luu Y Nhi Nguyen$^\dag$, Lakshmi J. Mohan$^\dag$, Serdar Boztas$^\dag$, Tran Thi Luong$^\ddag$, and Son Hoang Dau$^\dag$\\$^\dag$RMIT University, $^\ddag$Academy of Cryptographic Technique, Hanoi, Vietnam}

\date{}
\maketitle

\begin{abstract}
The issue of repairing Reed-Solomon codes currently employed in industry has been sporadically discussed in the literature. 
In this work we carry out a systematic study of these codes and investigate important aspects of repairing them under the trace repair framework, including which evaluation points to select and how to implement a trace repair scheme efficiently. In particular, we employ different heuristic algorithms to search for low-bandwidth repair schemes for codes of short lengths with typical redundancies and establish three tables of current best repair schemes for $[n,k]$ Reed-Solomon codes over $\text{GF}(256)$ with $4 \leq n \leq 16$ and $r = n - k \in \{2,3,4\}$. The tables cover most known codes currently used in the distributed storage industry. 
\end{abstract}

\section{Introduction}
\label{sec:intro}

Reed-Solomon codes~\cite{ReedSolomon1960}, invented more than 60 years ago, still constitute the most widely used family of erasure codes in distributed storage systems to date (see Table~\ref{tab:DSS}).
Due to their popularity in practice as well as their fundamental role in the development of classical coding theory, a significant amount of research has been conducted in recent years to improve their \textit{repair bandwidth} and \textit{I/O cost} required in recovering a single or multiple erasures~\cite{Shanmugam2014,GuruswamiWootters2016, GuruswamiWootters2017,YeBarg_ISIT2016,YeBarg_TIT2017,ChowdhuryVardy2017, ChowdhuryVardy_TIT_2019,DauMilenkovic2017,DuursmaDau2017,LiWangJafarkhani-Allerton-2017,LiWangJafarkhani_TIT_2019,TamoYeBarg2017,TamoYeBarg2018,
BermanBuzagloDorShanyTamo_ISIT_2021,ConTamo_arxiv_2021,
DauDuursmaKiahMilenkovicTwoErasures2017,DauDuursmaKiahMilenkovic2018,BartanWootters2017,
MardiaBartanWootters2018,ZhangZhang_ISCIT_2019,
DauDinhKiahTranMilenkovic2021, LiWangJafarkhani_CommLett_2021,DauDuursmaChu-ISIT-2018,DauViterbo-ITW-2018,LiDauWangJafarkhaniViterbo_ISIT2019}.
In the context of distributed storage systems, the repair bandwidth of an erasure code refers to the amount of data \textit{downloaded} from the helper nodes by a recovery node to reconstruct its lost content, while the I/O cost is the total amount of data \textit{read} from the local disks of the helper nodes.

\begin{table}[htb!]
\centering
\includegraphics[scale=0.32]{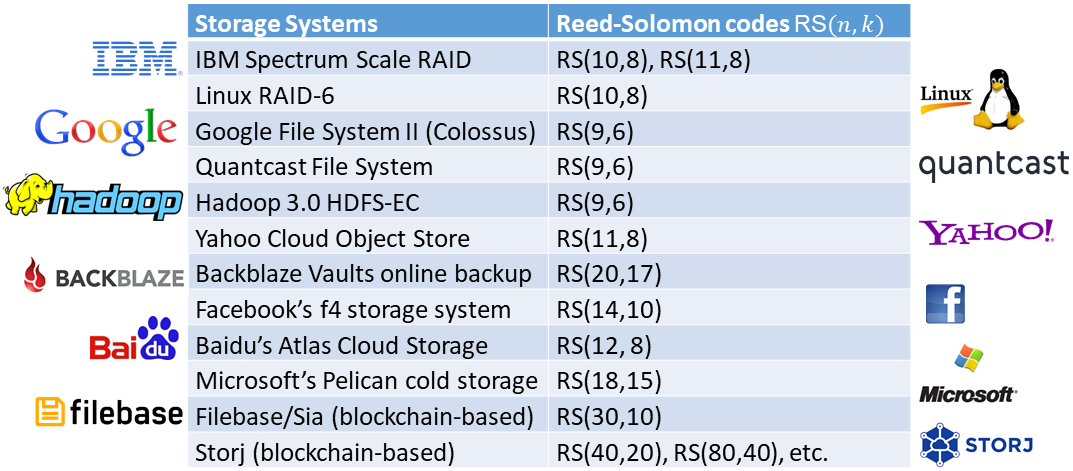}
\caption{A table of Reed-Solomon codes employed in major distributed storage systems - an updated version of~\cite[Table~I]{DauDuursmaKiahMilenkovic2018}.}
\label{tab:DSS}
\end{table}

Despite the growing literature, the treatment of short-length Reed-Solomon codes used in (or relevant to) practical storage systems has been sporadic and rather limited: RS(14,10) (used in Facebook's f4) has received the most attention~\cite{Shanmugam2014,GuruswamiWootters2016, GuruswamiWootters2017, DuursmaDau2017, LiWangJafarkhani_TIT_2019, LiWangJafarkhani_CommLett_2021}, while RS(5,3) and RS(6,4) were investigated in~\cite{Shanmugam2014}, and RS(11,8) and RS(12,8) were discussed in~\cite{LiWangJafarkhani_CommLett_2021}. Apart from \cite{DuursmaDau2017} and \cite{LiWangJafarkhani_CommLett_2021}, in which they were studied as the main topic of interest, these codes were mostly used as examples to demonstrate the inefficiency of the naive repair and the improvements in repair bandwidths that a carefully designed repair scheme can bring. 
We address this gap in the literature by providing a systematic investigation of low-bandwidth repair schemes for short-length Reed-Solomon codes that are relevant for the data storage industry and discuss several practical aspects including the selection of the evaluation points and implementation.

We first revisit four constructions of Reed-Solomon codes in existing implementations, observing that all but one are the same as the classical construction using polynomial evaluations (Section~\ref{sec:constructions}). 
Next, we discuss heuristic algorithms that can be used for construction of repair schemes for such codes and as a result, establish three tables of current-best repair schemes for codes of length $n \leq 16$ and redundancy $r \leq 4$ (Section~\ref{sec:algorithms}).
We also study in this section the impact of the evaluation points on repair bandwidths, demonstrating with an example that codes having the same $n$ and $k$ but using different lists of evaluation points may end up having different repair bandwidths.
Finally, we propose an efficient way to implement (in C) a trace repair scheme for Reed-Solomon code based on lookup tables and fast bitwise operations~\cite{implementation} on top of the state-of-the-art Intel Intelligent Storage Acceleration Library (ISA-L) (Section~\ref{sec:implementation}).  

\section{Definitions and Notations}
\label{sec:pre}

Let $[n]\triangleq\{1,2,\ldots,n\}$ and $[m,n]\triangleq\{m,m+1,\ldots,n\}$. Let~$\fq$ be the finite field of $q$ elements and $\fql$ be its extension field of degree $\ell$, where $q$ is a prime power.
In this work we only consider the case $q^\ell=256$. 
The field $\fte$ can also be viewed as a vector space over its subfields, i.e., $\fte \cong \fst^2 \cong \Ff^4 \cong \ft^8$.
Each element $\bb$ of $\fte$ can be represented as one byte, i.e., a vector of eight bits $(b_1,b_2,\ldots,b_8)$, or an integer in $[0,255]$. 
For instance, $6 = 2^2+2$ is represented by the vector $00000110$ and corresponds to $\bb =\bz^2+\bz$, where $\bz=2$ is a primitive element of $\fte$.
To accelerate the computation over $\fte$, 
additions between integers representing finite field elements are performed bitwise while multiplications are based on table lookups.
Bitwise operators in C including XOR `\^{}', AND `\&', and bit-shift `$\ll$' are heavily used in code optimization.

We use $\spn(U)$ to denote the $\fq$-subspace of $\fql$ spanned by a subset $U\subseteq\fql$. 
We use $\dq(\cdot)$ and $\rank(\cdot)$ to denote the dimension of a subspace and the rank of a set of vectors over~$\fq$. 
The (field) trace of an element $\bb \in \fql$ over $\fq$~is 
$\mathsf{Tr}_{\fql / \fq}(\bb) \triangleq \sum_{i = 0}^{\ell-1} \bb^{q^i}$. 
Given an $\fq$-subspace $W$ of $\fql$, the polynomial $L_W(x)\hspace{-2pt}=\hspace{-2pt}\prod_{w\in W}(x-w)$ is called the \textit{subspace polynomial} corresponding to $W$.

A \emph{linear $[n,k]$ code} $\C$ over $\fql$ is an $\fql$-subspace of $\fql^n$ of dimension $k$. Each element $\vbc=(\bco,\bct,\ldots,\bcn)\in \C$ is referred to as a \emph{codeword} and each component $\bcj$ is called a codeword symbol. 
The \emph{dual} $\Cd$ of a code $\C$ is the orthogonal complement of $\C$ in $\fql^n$ and has dimension $r = n - k$.
The elements of $\Cd$ are called \textit{dual codewords}. 
We call $r$ the \emph{redundancy} of the code.
A matrix $\bG\in \fql^{k\times n}$ of rank $k$ over $\fql$ whose rows are codewords of $\C$ is called a \textit{generator matrix} of the code.
Given a generator matrix $\bG$, a message $\vu = (\bu_1,\ldots,\bu_k)$ is transformed into a codeword $\vc = \vu\bG$.
A \textit{parity check matrix} of $\C$ is simply a generator matrix $\bH$ of the dual code $\Cd$. 

\section{Existing Constructions of Reed-Solomon Codes}
\label{sec:constructions}

\subsection{Classical Construction by Reed and Solomon}
\label{subsec:classic}

The following construction of Reed-Solomon codes is the original one proposed by Reed and Solomon in
\cite{ReedSolomon1960}.

\begin{definition} 
\label{def:RS}
Let $\fql[x]$ denote the ring of polynomials over~$\fql$. A Reed-Solomon code $\rsk \subseteq \fql^n$ of dimension $k$ 
with evaluation points $A=\{\bal_j\}_{j=1}^n \subseteq \fql$
is defined as \vspace{-5pt} 
\[
\rsk = \Big\{\big(f(\bal_1),\ldots,f(\bal_n)\big) \colon f \in \fql[x],\ \deg(f) < k \Big\}. \vspace{-5pt}
\]
We also use the notation RS$(n,k)$, ignoring the evaluation points. Clearly, a generator matrix of this code is the Vandermonde matrix $\text{Vand}(\bal_1,\ldots,\bal_n) \triangleq \left(\bal_j^{i-1}\right)_{1\leq i\leq k, 1\leq j\leq n}$.
\end{definition}

A \emph{generalized} Reed-Solomon code, $\grskl$, where $\vec{\bla} = (\bla_1,\ldots,\bla_n)\in \fql^n$, is defined similarly to a Reed-Solomon code, except that the codeword
corresponding to a polynomial $f$ is defined as $\big( \bla_1f(\bal_1),\ldots,\bla_n f(\bal_n) \big)$, where $\bla_j \neq 0$ for all $j \in [n]$. 
It is well known that the dual of a Reed-Solomon code $\rsk$ is a generalized Reed-Solomon code $\grsnkl$, 
for some multiplier vector $\vec{\bla}$~(\cite[Chp.~10]{MW_S}). 
We sometimes use the notation GRS$(n,k)$, ignoring $A$ and $\vlam$.

We often use $f(x)$ to denote a polynomial of degree at most $k-1$, which corresponds to a codeword of the Reed-Solomon code $\C=\rsk$, and $g(x)$ to denote a polynomial of degree at most $r-1=n-k-1$, which corresponds to a codeword of the dual code $\Cd$. Since
$
\sum_{j=1}^n \gaj(\bla_j\faj) = 0, 
$
we also refer to the polynomial $g(x)$ as a \textit{check} polynomial for $\C$. 
 
Next, we discuss four main constructions of Reed-Solomon codes found in practical systems, three of which are equivalent to the original construction and one is invalid. All are over $\fte$.\vspace{-5pt}

\subsection{Constructions in Intelligent Storage Acceleration Library}

Both implementations of Reed-Solomon codes provided by ISA-L are systematics~\cite{ISAL}. 
The first one, also found in the Quantcast File System~\cite{QFS}, uses the generator matrix $\bG = [\bI_k \mid \bV]$ in which $\bV$ is a $k\times (n-k)$ Vandermonde matrix: $\bV = (\bx_j^{i-1})$, $i \in [1,k]$, $j \in [1,n]$, where $\bx_j = \bz^{j-1}$ and $\bz=2$ is a primitive element of $\fte$. This is \textit{not} a construction of a Reed-Solomon code, not even an MDS code (i.e., achieving the Singleton bound~\cite{MW_S}). Indeed, it is known that $\bG = [\bI_k \mid \bV]$ generates an MDS code only if every square submatrix of $\bV$ is invertible (\cite[Ch.~11, Thm.~8]{MW_S}), which is not true for a Vandermonde matrix in general. Although when $n = 9$ and $k = 6$, as in the Quantcast File System~\cite{QFSPaper2013}, the code is still MDS, we ignore this construction as it is \textit{incorrect} in general.

We focus on the second construction, which uses the generator matrix $\bG = [\bI_k \mid \bC]$ in which $\bC$ is a $k\times (n-k)$ Cauchy matrix: $\bC=\left(c_{i,j}=\frac{1}{\bx_i+\by_j}\right)$, $i \in [1,k]$, $j \in [1,n-k]$, where $\bx_i = i-1$ and $\by_j = k+j-1$. Note that $\bx_i$ and $\by_j$ are written using the integer representations of elements of $\fte$ and $\bx_i+\by_j$ refers to the (bitwise) addition of two field elements.
According to~\cite[Thm.~1]{RothSeroussi1985}, this code is the same as $\grskl$ with $A = [\vx \mid \vy] = [0,n-1]$ and $\vlam=(\bla_1,\ldots,\bla_n)$ defined as \vspace{-5pt}
\[
\bla_j = 
\begin{cases}
1/\prod_{s = 1, s \neq j}^k (\bx_s+\bx_j),& 1 \leq j \leq k,\\
1/\prod_{s = 1}^k (\bx_s+\by_{j-k}),& k+1 \leq j \leq n. 
\end{cases}\vspace{1pt}
\]
For example, for $n = 9, k = 6$, the Cauchy-based construction of RS(9,6) code, or more precisely, a GRS(9,6), uses $\vx = (0, 1, 2, 3, 4, 5) $ and $\vy = (6,7,8)$ and has a generator matrix\vspace{-3pt}
\[
\bG\hspace{-2pt} = \hspace{-2pt}\begin{pmatrix}
\bI_6 & \rvline & 
\begin{matrix}
122 & 186 & 173 \\
186 & 122 & 157 \\
71  & 167 & 221 \\
167 & 71  & 152 \\
142 & 244 & 61 \\
244 & 142 & 170
\end{matrix}
\end{pmatrix}
\hspace{-4pt}=\hspace{-4pt}
\begin{pmatrix}
\bI_6 & \rvline & 
  \begin{matrix}
  \bz^{229} & \bz^{57} & \bz^{252} \\
  \bz^{57} & \bz^{229} & \bz^{32}\\
  \bz^{253} & \bz^{205} & \bz^{204}\\
  \bz^{205} & \bz^{253} & \bz^{17}\\
  \bz^{254} & \bz^{230} & \bz^{228}\\
  \bz^{230} & \bz^{254} & \bz^{151}
  \end{matrix}
\end{pmatrix}\hspace{-3pt},
\]
where $\bz=2$ is a primitive element of $\fte$ satisfying $\bz^8+\bz^4+\bz^3+\bz^2+\bz^0=0$ (see the list of Conway polynomials~\cite{ConwayPolynomials}). 
Note that we are using both the integer representation and the exponent representation of a finite field element. 
For instance, $122$ has the binary representation $01111010$,
corresponding to $\bz^6+\bz^5+\bz^4+\bz^3+\bz = \bz^{229}$.
This is a $\text{GRS}(A,6,\vlam)$~with $A=[0,8] = \{0,1,\bz,\bz^{25}=1~+~\bz,\bz^2,\bz^{50}=\bz^2+\bz^0,\bz^{26}=\bz^2+\bz,\bz^{198}=\bz^2+\bz~+~1,\bz^{3}\}$~and $\vlam = (\bz^{177}, \bz^{177}, \bz^5, \bz^5, \bz^{234}, \bz^{234}, \bz^{208}, \bz^{208}, \bz^{119})$. 
Using \cite[Ch.~10, Thm.~4]{MW_S}, we can deduce that the dual of this $\text{GRS}(A,6,\vlam)$ is an $\text{GRS}(A,3,\vga)$, where $\vga = (\bz^{47}, \bz^{82}, \bz^{171}, \bz^{239}, \bz^{221}, \bz^{144}, \bz^{75}, \bz^{199}, 1)$. We believe that the selection of $A$ as the first $n$ nonegative integers in ISA-L is purely for the convenience of the for-loops in its C code and has no significant reasons behind. 

\subsection{Code Construction in Backblaze Vaults}

Backblaze Vaults (\cite{BackblazeVaultsBlog, BackblazeVaultsGithub}) uses a systematic generator matrix $\bG = [\bI_k \mid \bV_1^{-1}\bV_2] = \bV_1^{-1}\bV$, where $\bV=[\bV_1\mid \bV_2]$ is the $k\times n$ Vandermonde matrix created by the finite field elements corresponding to $0,1,\ldots,n-1$. In other words, this is a standard $\rsk$ in which $A = [0,n-1]$, the same set of evaluation points as in the $\grskl$ of ISA-L.

\subsection{Code Construction in Facebook's f4}
The GRS(14,10) used in Facebook's f4 storage system~\cite{FBf42014} has been used repeatedly in the literature to demonstrate either the inefficiency of Reed-Solomon code's naive repair scheme or improvements from it. At the moment we could no longer locate the official source of the implementation of this code. However, from the previous studies~\cite{Shanmugam2014, GuruswamiWootters2016, GuruswamiWootters2017, DuursmaDau2017} and from our own copy of the code, GRS(14,10)~in~f4 was constructed via the generator polynomial (see~\cite[Chp.~7]{MW_S}) $g(x)=(x-1)(x-\bz)(x-\bz^2)(x-\bz^3)$. 
In general, the generator polynomial is $g(x)=\prod_{i=0}^{r-1} (x-\bz^i)$ and the codewords correspond to vectors of coefficients of all polynomials $c(x)=\sum_{i=0}^{n-1}\bc_i x^i\in \fte[x]$ that admit $1,\bz,\bz^2,\ldots,\bz^{r-1}$ as roots, where $r = n - k$. Equivalently, the code has a parity check matrix $\bH = \text{Vand}(1,\bz,\bz^2,\ldots,\bz^{n-1})$, and hence, it is the dual of an $\rsk$ with $A = \{1,\bz,\bz^2,\ldots,\bz^{n-1}\}$.

To summarize, all (valid) constructions of (generalized) Reed-Solomon codes in industry available to us, although may look different, are still the same as the original one provided in Section~\ref{subsec:classic}.
This is good news because we can focus on 
just the original construction. Although straightforward, this section serves as a one-stop reference for future studies in this direction. \vspace{-10pt}

\section{Heuristic Search for Low-Bandwidth Schemes}
\label{sec:algorithms}

\subsection{Trace Repair Framework}
\label{subsec:tracerepair}

Following the framework developed in~\cite{GuruswamiWootters2016, GuruswamiWootters2017}, a (linear) trace repair scheme for the component $\bcjs$ of a codeword $\vc$ of a linear $[n,k]$ code $\C$ over $\fql$ corresponds to a set of $\ell$ \textit{dual} codewords $\left\{\bgi\right\}_{i\in [\ell]} \subset \Cd$, $\bgi=\left(\bgi_1,\ldots,\bgi_n\right)$, satisfying the \textit{Full-Rank Condition}:
$
\rank\big\{\bgijs\colon i\in [\ell]\big\}=\ell$. 
Such a repair scheme is denoted $\Rg$, which can be viewed as an $\ell\times n$ \textit{repair matrix} with $\bgij$ as its $(i,j)\text{-entry}$. As established in~\cite{GuruswamiWootters2016, GuruswamiWootters2017}, the repair bandwidth of such a repair scheme (in \textit{bits}) is $b(\R) = \sum_{j\in[n]\setminus \{j^*\}} r_j$, where $r_j \triangleq \rank\left(\left\{\bgij\colon i \in [\ell]\right\}\right)$.
To repair all $n$ components of~$\vc$, we need $n$ such repair schemes (possibly with repetition). 
See, e.g., \cite{DauDinhKiahTranMilenkovic2021}, for a detailed explanation of why the above scheme works with an example.
We describe an implementation of this repair scheme later in Section~\ref{sec:implementation}. 

Note that as the dual of a (generalized) Reed-Solomon code is another generalized Reed-Solomon code, searching for a set of dual codewords is equivalent to searching for a set of polynomials $\left\{g_i(x)\colon i \in [\ell]\right\}\subset \fql[x]$ of degree at most $r-1$. Using the notation above, we have $\bgij = \bla_j g_i(\balj)$. 
As $\bla_j$'s do not affect the repair bandwidth, we usually ignore them.

\subsection{Heuristic Search for Low-Bandwidth Repair Schemes}
\label{subsec:algorithms}

As discussed in Section~\ref{sec:constructions} and Section~\ref{subsec:tracerepair}, to construct low-bandwidth repair schemes for Reed-Solomon codes over $\fte$, one must find sets of eight check polynomials $\left\{g_i(x)\right\}_{i\in[8]}\subset \fte[x]$ that satisfy the Full-Rank Condition while incurring low repair bandwidths.

There are two main types of check polynomials applicable for Reed-Solomon codes over $\fte$: the \textit{algebraic} ones are based on algebraic structures such as subfields and subspaces~\cite{GuruswamiWootters2016, GuruswamiWootters2017, DauMilenkovic2017, DauDinhKiahTranMilenkovic2021, BermanBuzagloDorShanyTamo_ISIT_2021, LiWangJafarkhani_CommLett_2021} or cosets of subfields~\cite{LiWangJafarkhani-Allerton-2017, LiWangJafarkhani_TIT_2019}, while the others are found by a \textit{computer search}~\cite{GuruswamiWootters2016, DuursmaDau2017}. In some special cases, for example when $r = 4$ and $n\geq 12$ (e.g., GRS(14,10) or GRS(12,8)), the algebraic constructions generate the lowest known repair bandwidths~\cite{LiWangJafarkhani_CommLett_2021}. But for many other cases, an algebraic construction only yields an insignificant reduction from the naive bandwidth, e.g., $6\%$ when $r = 3$ and $n = 11$, whereas a computer search can produce a scheme achieving a much higher reduction, e.g., $28\%$ in this case (see Table~\ref{tab:r=3}).

Note that there are many different RS$(n,k)$'s depending on which set of evaluation points $A$ is chosen (we will show later that different $A$'s may lead to different (optimal) repair bandwidths). 
In this work, we examine two types of codes: 
\begin{itemize}
	\item ISAL-codes: $A=[0,n-1]\subseteq\fte$, $n\leq 256$,
	\item $\fst$-based codes: $A\hspace{-2pt}=\hspace{-2pt}\{0,1,\bz_{16},\ldots,\bz_{16}^{n-2}\}\hspace{-2pt}\subseteq \hspace{-2pt}\fst$, $n\leq 16$, where $\bz_{16}$ is a primitive element of $\fst$ satisfying $\bz_{16}^4+\bz_{16}+1=0$ (see the list of Conway polynomials~\cite{ConwayPolynomials}).
\end{itemize}
Searching for good repair schemes for ISAL-codes is much harder because of the very large search space (over $\fte$). For $\fst$-based codes, the search complexity is lower, which allows us to locate good schemes within a reasonable amount of time. The ``lifting" technique, which previously has been employed only in the context of algebraic constructions~\cite{TamoYeBarg2017,TamoYeBarg2018,LiWangJafarkhani-Allerton-2017, LiWangJafarkhani_TIT_2019, LiWangJafarkhani_CommLett_2021}, is now used in our heuristic algorithms to transform a repair scheme for codes over $\fst$ to a repair scheme for codes over $\fte$ (Proposition~\ref{pro:lifting}).   
We observe that the current-best repair bandwidths of $\fst$-based codes are always at least as low as those of ISAL-codes and in many cases are smaller.
Showing that this is true in general for Reed-Solomon codes over $\fte$ (or finding a counterexample) is an interesting open problem.

To find the lowest-bandwidth repair scheme, in general, we need to examine $\binom{P}{\ell}$ different sets of $\ell$ polynomials each, where $P$ is the number of candidate check polynomials. This number is huge even for very modest parameters. To reduce the search complexity, one needs to reduce $P$ and/or $\ell$. We discuss different ways to achieve complexity reduction below.

To reduce $P$, the number of candidate check polynomials, we make the following simplifying assumptions (see also~\cite{DuursmaDau2017}).
\begin{itemize}
	\item (A1) $\deg(g_i)=r-1$: we prove in Proposition~\ref{pro:highest_degree} that this assumption does \textit{not} lead to suboptimal solutions, and
	\item (A2) $g_i(x)$ has $r-1$ (possibly repeated) roots in $A$: this is based on the (unproven) intuition that more zeros in the repair matrix $\R$ may lead to a low repair bandwidth. This has been confirmed empirically in our various experiments.
\end{itemize}
\vspace{-3pt}

\begin{proposition}
\label{pro:highest_degree}
Let $\left\{g_i(x)\right\}_{i \in [\ell]}$ be a set of check polynomials of degree \textit{at most} $r-1$ corresponding to a repair scheme for $\bcjs$ of an $\rsk$ over $\fql$. Then there exists another set of check polynomials $\left\{h_i(x)\right\}_{i \in [\ell]}$ of degree \textit{exactly} $r-1$ that can repair $\bcjs$ with the same or smaller bandwidth.
\end{proposition}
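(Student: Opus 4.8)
The plan is to take each check polynomial $g_i(x)$ of degree strictly less than $r-1$ and "promote" it to degree exactly $r-1$ by multiplying by a suitable linear factor $(x - \bal_{j^*})$ — or more precisely, by choosing a factor that vanishes at an evaluation point we do not care about. The key observation is that multiplying $g_i(x)$ by $(x - \bal_{j^*})$ gives a new polynomial $h_i(x)$ of degree $\deg(g_i) + 1$, still of degree at most $r-1$, and the associated dual-codeword entries become $h_i(\balj) = (\balj - \bal_{j^*}) g_i(\balj)$ for each $j$. At the repair location $j^*$ itself, $h_i(\bal_{j^*}) = 0$; but what matters for the Full-Rank Condition is the rank of $\{\bgijs : i \in [\ell]\}$, so I will instead multiply by $(x - \bal_t)$ for some fixed $t \neq j^*$ — this scales column $j$ of the repair matrix by the nonzero constant $(\balj - \bal_t)$ for every $j \neq t$, kills column $t$ entirely, and leaves column $j^*$ scaled by the nonzero constant $(\bal_{j^*} - \bal_t)$.

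First I would set $h_i(x) = g_i(x) \cdot \prod_{s \in S}(x - \bal_s)$ where $S$ is a multiset of $r - 1 - \deg(g_i)$ evaluation points chosen from $A \setminus \{j^*\}$, padding with repeats if necessary; since $|A| = n \geq r$ (in fact $n > r$ for any code of positive dimension, and the degenerate case $k = 0$ is vacuous), such points exist. Then $\deg(h_i) = r - 1$ exactly, and each $h_i$ is still a valid check polynomial (degree $\leq r-1$), so $\{h_i\}$ corresponds to a set of dual codewords. Next I would verify the Full-Rank Condition: since $h_i(\bal_{j^*}) = g_i(\bal_{j^*}) \cdot c$ with $c = \prod_{s \in S}(\bal_{j^*} - \bal_s) \neq 0$ (the $\bal_s$ are distinct from $\bal_{j^*}$), we have $\bgijs_{\text{new}} = \bla_{j^*} c \, g_i(\bal_{j^*})$, i.e., the column at $j^*$ of the new repair matrix is a nonzero scalar multiple of the old one, hence has the same rank $\ell$. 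Finally I would bound the bandwidth: for each $j \neq j^*$, the new $j$-th column entries are $\bla_j \big(\prod_{s\in S}(\balj - \bal_s)\big) g_i(\balj)$, which is either the zero vector (if $\balj \in \{\bal_s : s \in S\}$) or a nonzero-scalar rescaling of the old $j$-th column; in the first case $r_j$ drops to $0$, in the second $r_j$ is unchanged. Summing over $j \neq j^*$ gives $b(\R_{\text{new}}) \leq b(\R_{\text{old}})$.

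The only genuinely delicate point is the edge case where $A$ is too small to supply the needed padding points — but this cannot happen: if $\deg(g_i) \leq r - 1$ then we need at most $r - 1$ factors, and $A \setminus \{j^*\}$ has $n - 1 \geq r - 1$ elements since $n \geq r$ always (and $n > k \geq 1$ forces $n \geq 2$, with $n = r$ only when $k = 0$, a trivial code). I would also remark that the construction is done \emph{independently} for each $i$, with possibly different multisets $S_i$ — there is no need for a common set of extra roots — since the Full-Rank Condition and the bandwidth formula decouple across the rows in exactly the way used above. I do not anticipate any obstacle beyond carefully stating these degenerate-parameter caveats; the argument is essentially a one-line scaling observation dressed up in the repair-scheme language.
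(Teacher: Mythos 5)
You have a genuine gap, and it sits exactly where you wave it away: the claim that the construction ``decouples across the rows.'' Because the $g_i$ may have different degrees, the multiset $S_i$ of padding roots must have size $r-1-\deg(g_i)$, which varies with $i$; consequently the entry in row $i$, column $j$ of the new repair matrix is the old entry scaled by $c_{i,j}=\prod_{s\in S_i}(\balj-\bal_s)\in\fql$, a constant depending on $i$ as well as $j$. But the Full-Rank Condition and the bandwidth are $\fq$-ranks of sets of $\fql$-elements, and multiplying the members of such a set by \emph{different} nonzero scalars from $\fql$ does not preserve $\fq$-rank. Over $\Ff$ with $\fq=\ft$: the set $\{1,\omega\}$ has rank $2$, yet $\{1\cdot 1,\ \omega\cdot\omega^{2}\}=\{1,1\}$ has rank $1$; conversely $\{1,1\}$ has rank $1$ while $\{1\cdot 1,\ 1\cdot\omega\}=\{1,\omega\}$ has rank $2$. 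So your new column at $j^*$ need not have rank $\ell$ (the scheme can stop being a repair scheme altogether), and the rank $r_j$ of a helper column can strictly \emph{increase}, so the bandwidth bound fails as well. Your argument is sound only in the special case where all the $g_i$ have the same degree and you use one common multiset $S$ for every row, since then each column is scaled by a single common constant (or zeroed), and a common nonzero scalar does preserve $\fq$-rank.

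The paper's proof is structured precisely to avoid this trap. In Case 1 (some $g_1$ already has degree $r-1$) it replaces each lower-degree $g_i$ by $g_i+g_1$; this is an elementary row operation with coefficients in the prime field, so $\spn\left(\{h_i(\bal)\}_{i\in[\ell]}\right)=\spn\left(\{g_i(\bal)\}_{i\in[\ell]}\right)$ at \emph{every} evaluation point, preserving the Full-Rank Condition and every $r_j$ exactly. In Case 2 (all degrees below $r-1$) it multiplies \emph{all} the $g_i$ by the \emph{same} factor $(x-\balb)^{r-1-\max_t\{\deg(g_t)\}}$ with one common exponent, which scales each column by one common constant (or kills the column at $\balb$, which only helps), raises at least one polynomial to degree exactly $r-1$, and then invokes Case 1. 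To repair your argument you would need to replace the per-row padding with a rank-preserving device of this kind; the additive trick of Case 1 is the idea you are missing.
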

\begin{proof}
\textbf{Case 1.} If one polynomial has degree exactly $r-1$, e.g., $\deg(g_1)=r-1$, then we set\vspace{-3pt}
\[
h_i(x)=
\begin{cases}
g_i(x),&\text{ if } \deg(g_i) = r-1,\\
g_i(x) + g_1(x), & \text{ if } \deg(g_i) < r-1.
\end{cases}\vspace{-3pt}
\]
Then $\deg(h_i)=r-1$ for every $i$ and moreover,\vspace{-3pt} 
\[
\spn\left(\{h_i(\bal)\}_{i \in [\ell]}\right) = \spn\left(\{g_i(\bal)\}_{i \in [\ell]}\right),
\]
for every $\bal \in A$. Thus, $\lhr$ is another repair scheme for $\bcjs$ and has the same bandwidth as $\lgr$.

\textbf{Case 2.} If $\deg(g_i)<r-1$ for every $i \in [\ell]$ then we select an $\balb \in A \setminus \{\baljs\}$ and set $h_i(x)=g_i(x)(x-\balb)^{r-1-\max_t\{\deg(g_t)\}}$. 
Then $\left\{h_i(x)\right\}_{i \in [\ell]}$ is another repair scheme for $\bcjs$ and has the same or smaller  bandwidth. Moreover, at least one $h_i$ has degree exactly $r-1$, which reduces this case to Case 1. 
\end{proof}
\vspace{-3pt}

To reduce $\ell$, the number of polynomials needed in a search, we use the well-known lifting and extension techniques. 
The lifting technique allows us, for instance, to transform a repair scheme for a Reed-Solomon code constructed in $\fst$ to a repair scheme for another Reed-Solomon code constructed in $\fte$ using the same set of evaluation points while doubling the bandwidth.
The extension technique allows us, for example, to transform a repair scheme of a Reed-Solomon code over $\fte$ with the base field $\Ff$ into a repair scheme with base field $\ft$.
For completeness, we formalize these techniques below.\vspace{-3pt}

\begin{proposition}[Lifting]
\label{pro:lifting}
Suppose that $m\mid \ell$ and $\{g_i(x)\}_{i\in [m]}\subset \fqm[x]$ corresponds to a repair scheme with bandwidth $b$ (measured in elements in $\fq$) for the $j^*$-th component of an $\rsk$ over $\fqm$. Then $\{\bbj g_i(x)\}_{i\in [m],j \in [\ell/m]}$, where $\{\bbj\}_{j \in [\ell/m]}$ is an $\fqm$-basis of $\fql$, corresponds to a repair scheme with bandwidth $\frac{\ell}{m}b$ for the $j^*$-component of the $\rsk$ with the same evaluations points $A$ but constructed over~$\fql$. 
\end{proposition}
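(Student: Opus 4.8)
The plan is to verify directly that the lifted set $\{\bbj g_i(x)\}_{i\in[m],\, j\in[\ell/m]}$ satisfies the Full-Rank Condition for the $j^*$-th component and then to compute its repair bandwidth component by component. Write $t = \ell/m$ and fix an $\fqm$-basis $\{\bbj\}_{j\in[t]}$ of $\fql$. The central observation, used twice, is that for any finite set $S = \{s_1,\ldots,s_m\}\subseteq\fqm$, the $\fq$-rank of the lifted set $\{\bbj s_i : i\in[m], j\in[t]\}\subseteq\fql$ equals $t\cdot\rank_{\fqm}\!\big(\{s_i\}_{i\in[m]}\big)$ when the latter is computed as an $\fqm$-dimension — intuitively, tensoring an $\fqm$-subspace of $\fqm$ (necessarily either $\{0\}$ or all of $\fqm$, but the statement is cleaner phrased for the general case of $\fqm$-spans inside a larger field, which is what we actually need at the other evaluation points) with the basis $\{\bbj\}$ multiplies $\fq$-dimensions by $t$. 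More precisely: if $V\subseteq\fqm$ is the $\fqm$-span of $\{s_i\}$, then $\spn_{\fq}\{\bbj s_i\}$ is exactly the $\fq$-span of $\{\bbj\}_{j\in[t]}$ scaled into $V$, which has $\fq$-dimension $t\cdot\dim_{\fqm}(V)$ since $\{\bbj v : j\in[t]\}$ is $\fq$-linearly independent whenever $v\neq 0$ and these sets for a basis of $V$ assemble without overlap.

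With that lemma in hand, the two steps are routine. \textbf{Full-Rank Condition:} at $j^*$, the original scheme gives $\rank_{\fqm}\{g_i(\baljs)\}_{i\in[m]} = m$, i.e.\ the $g_i(\baljs)$ span all of $\fqm$ over $\fqm$ (equivalently, they are not all zero, but in fact the hypothesis says their $\fqm$-rank is $m$, which for scalars in $\fqm$ means at least one is nonzero — and then $\spn_{\fqm}$ of them is $\fqm$). Applying the lemma with $V=\fqm$ gives $\rank_{\fq}\{\bbj g_i(\baljs)\} = t\cdot m = \ell$, so the lifted set of $\ell$ dual codewords satisfies the Full-Rank Condition over $\fq$ for the code over $\fql$. \textbf{Bandwidth:} for each $j\neq j^*$, the contribution of the original scheme is $r_j = \rank_{\fqm}\{g_i(\balj)\}_{i\in[m]}$, and by the lemma the lifted set contributes $\rank_{\fq}\{\bbj g_i(\balj)\} = t\cdot r_j$. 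Summing over $j\in[n]\setminus\{j^*\}$ gives total bandwidth $t\sum_{j\neq j^*} r_j = t\cdot b = \frac{\ell}{m}b$ measured in elements of $\fq$. One also needs to note that each $\bbj g_i(x)$ still has degree at most $r-1$ and hence is a legitimate check polynomial (dual codeword) for the $\rsk$ over $\fql$ — this is immediate since multiplying by the scalar $\bbj\in\fql$ does not change the degree, and the evaluation-point set $A$ is unchanged.

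The one genuine subtlety — and the step I would be most careful about — is the rank identity $\rank_{\fq}\{\bbj s_i\} = t\cdot \rank_{\fqm}\{s_i\}$ when the $s_i$ live not in $\fqm$ but in the larger field $\fql$ itself (which is the situation at a general evaluation point $\balj\in A$ if $A\subseteq\fqm$; here $g_i(\balj)\in\fqm$ so we are fine, but the cleanest proof handles $\fql$-valued evaluations too, and the lifting literature states it that way). The right way to see it: let $W = \spn_{\fqm}\{s_i\}\subseteq\fql$, an $\fqm$-subspace of $\fq$-dimension $m\cdot\dim_{\fqm}(W)/m$... more simply, $\dim_{\fq}(W) = m\cdot\dim_{\fqm}(W)$ wait — one must track that $W$ here is a subspace over $\fqm$, so $\dim_{\fq}(W) = m\cdot\dim_{\fqm}(W)$, and then $\spn_{\fq}\{\bbj s_i\} = \spn_{\fq}\{\bbj w : w\in W\} = \fql\text{-“expansion”} $, with $\fq$-dimension $t\cdot\dim_{\fq}(W)/1$? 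I would instead avoid this bookkeeping hazard entirely by observing that since all $g_i(\balj)\in\fqm$, the relevant rank is $\rank_{\fqm}\{g_i(\balj)\}\in\{0,1\}$ (scalars!), and the lemma reduces to: $\{\bbj c\}_{j\in[t]}$ has $\fq$-rank $t$ if $c\neq0$ and $0$ if $c=0$, which is trivial because $\{\bbj\}$ is an $\fqm$-basis hence $\fq$-linearly independent, and $c\neq0$ just scales a basis. This collapses the "main obstacle" to a triviality for the case actually at hand; I would flag in a remark that the general $\fql$-valued version follows by the standard tensor-product argument.
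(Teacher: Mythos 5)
The paper states Proposition~\ref{pro:lifting} without proof, so there is no argument of the authors' to compare yours against; judged on its own, your proof has the right architecture (verify the Full-Rank Condition at $j^*$, show every helper column's rank is multiplied by $t=\ell/m$, sum up), but the key lemma is computed over the wrong field, and this is a genuine error rather than a notational slip. In the framework of Section~\ref{subsec:tracerepair} all ranks are taken over the \emph{base} field $\fq$: the original scheme consists of $m=[\fqm:\fq]$ dual codewords with $\rank\{g_i(\baljs)\}_{i\in[m]}=m$ (an $\fq$-basis of $\fqm$), and $b=\sum_{j\ne j^*} r_j$ where $r_j=\rank\{g_i(\balj)\}_{i\in[m]}$ can be any integer in $[0,m]$. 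Your lemma asserts that the lifted column has $\fq$-rank $t\cdot\mathsf{rank}_{\fqm}\{s_i\}$, and your final paragraph commits to this reading (``the relevant rank is $\mathsf{rank}_{\fqm}\{g_i(\balj)\}\in\{0,1\}$ (scalars!)''). That makes the Full-Rank step incoherent --- the $\fqm$-rank of $m$ scalars in $\fqm$ can never equal $m$, so the arithmetic $t\cdot m=\ell$ only comes out right by silently reverting to the $\fq$-rank --- and it breaks the bandwidth count: each helper would contribute $0$ or $t$, and the total would not equal $\tfrac{\ell}{m}b$ unless every nonzero column of the original scheme happened to have $\fq$-rank exactly $1$.

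The repair is small. The identity you need is
\[
\rank\big\{\bbj s_i \colon i\in[m],\ j\in[t]\big\} \;=\; t\cdot \rank\big\{s_i \colon i\in[m]\big\},
\qquad s_i\in\fqm,
\]
with the $\fq$-rank on \emph{both} sides. To prove it, let $W=\spn\{s_i\}_{i\in[m]}\subseteq\fqm$ with $\dq W=d$; then $\spn\{\bbj s_i\}=\sum_{j\in[t]}\bbj W$, and this sum is direct because $\sum_{j}\bbj c_j=0$ with $c_j\in\fqm$ forces every $c_j=0$ ($\{\bbj\}_{j\in[t]}$ being an $\fqm$-basis of $\fql$), so the $\fq$-dimension is $td$. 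Applied at $j^*$ this gives rank $tm=\ell$, i.e.\ the Full-Rank Condition for the code over $\fql$; applied at each helper it gives $t\,r_j$, hence total bandwidth $t\sum_{j\ne j^*}r_j=\tfrac{\ell}{m}b$. Your observation that each $\bbj g_i(x)$ still has degree at most $r-1$ and is therefore a legitimate check polynomial for the lifted code (the dual multipliers $\bla_j$ depend only on $A$, which is unchanged) is correct and should be retained.
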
 \vspace{-10pt}

\begin{proposition}[Extension]
\label{pro:extension}
Suppose that $m$ divides $\ell$ and the set $\{g_i(x)\}_{i\in [\ell/m]}\subset \fql[x]$ corresponds to a repair scheme with bandwidth $b$ (measured in elements in $\fqm$) for the component $\bcjs$ of an $\rsk$ over $\fql$, treating $\fqm$ as the base field. Then the set $\{\bgaj g_i(x)\}_{i\in [\ell/m],j \in [m]}$, where $\{\bgaj\}_{j \in [m]}$ is an $\fq$-basis of $\fqm$, corresponds to a repair scheme with bandwidth $bm$ (measured in elements in $\fq$) for the $\bcjs$ of the same code. 
\end{proposition}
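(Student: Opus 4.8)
The plan is to mirror the structure of the lifting argument (Proposition~\ref{pro:lifting}) but with the roles of the tower of fields shifted one level down. The key observation is that a ``repair scheme over $\fql$ with base field $\fqm$'' is, by definition, a set of $\ell/m$ dual codewords $\{g_i(x)\}$ whose evaluations at $\baljs$ span $\fql$ as an $\fqm$-vector space (the Full-Rank Condition, now stated over $\fqm$), and whose bandwidth is $\sum_{j\neq j^*}\dim_{\fqm}\mathsf{span}_{\fqm}\{g_i(\balj)\}$. Multiplying through by an $\fq$-basis $\{\bgaj\}_{j\in[m]}$ of $\fqm$ should convert this into the same kind of statement one level down, at the cost of a factor $m$ in every rank. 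First I would verify that $\{\bgaj g_i(x)\}$ still consists of check polynomials of degree at most $r-1$: this is immediate since $\bgaj\in\fqm\subseteq\fql$ is a scalar and $\deg(\bgaj g_i)=\deg(g_i)\le r-1$, so these are genuine codewords of $\Cd$.

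Next I would establish the crucial linear-algebra fact: for any $\fqm$-subspace $V\subseteq\fql$ and any $\fq$-basis $\{\bgaj\}_{j\in[m]}$ of $\fqm$, the set $\{\bgaj v : v\in V\}$ spans, over $\fq$, exactly the same set $V$ (viewed now as an $\fq$-vector space), and $\dim_{\fq}V = m\cdot\dim_{\fqm}V$. The point is that if $\{v_i\}$ is an $\fqm$-basis of $V$, then $\{\bgaj v_i\}_{i,j}$ is an $\fq$-basis of $V$ — spanning because every $\fqm$-combination of the $v_i$ unfolds into an $\fq$-combination of the $\bgaj v_i$, and independence by a dimension count. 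Applying this with $V = \mathsf{span}_{\fqm}\{g_i(\balj) : i\in[\ell/m]\}$ for each fixed $j$, and noting that $\bgaj g_i$ evaluated at $\balj$ equals $\bgaj\cdot g_i(\balj)$, gives
\[
\mathsf{rank}_{\fq}\{\bgaj g_i(\balj) : i\in[\ell/m], j\in[m]\} = m\cdot\mathsf{rank}_{\fqm}\{g_i(\balj) : i\in[\ell/m]\}.
\]
Summing over $j\in[n]\setminus\{j^*\}$ turns the original bandwidth $b$ (in $\fqm$-symbols) into $bm$ (in $\fq$-symbols). Taking $\balj = \baljs$ in the same identity shows the new set has full $\fq$-rank $m\cdot(\ell/m)=\ell$ at position $j^*$, i.e.\ the Full-Rank Condition over $\fq$ holds, so $\{\bgaj g_i(x)\}$ is indeed a valid repair scheme for $\bcjs$ with base field $\fq$.

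I do not expect a serious obstacle here; the proposition is essentially a bookkeeping statement about how ranks behave under a change of base field within a tower $\fq\subseteq\fqm\subseteq\fql$. The one point that requires a little care — and is really the only ``content'' — is the basis-multiplication lemma of the previous paragraph, specifically the claim that multiplying an $\fqm$-basis of a subspace $V$ by an $\fq$-basis of $\fqm$ yields an $\fq$-basis of $V$; this uses that $V$ is closed under multiplication by $\fqm$ (which it is, being an $\fqm$-subspace), so that $\bgaj v_i\in V$, together with the dimension identity $\dim_{\fq}V=m\dim_{\fqm}V$ to upgrade ``spanning'' to ``basis.'' Once this lemma is in hand, the rest is a direct substitution, exactly parallel to the proof of Proposition~\ref{pro:lifting}.
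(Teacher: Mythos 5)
The paper states Proposition~\ref{pro:extension} without proof (both it and Proposition~\ref{pro:lifting} are only ``formalized'' and left unproven), so there is no author argument to compare yours against; judged on its own, your proof is correct. You correctly isolate the one substantive point --- that for an $\fqm$-subspace $V\subseteq\fql$ generated over $\fqm$ by the evaluations $\{g_i(\balj)\}$, the set $\{\bgaj g_i(\balj)\}_{i,j}$ has $\fq$-span equal to $V$ and hence $\fq$-rank $m\cdot\mathsf{rank}_{\mathbb{F}_{q^m}}\{g_i(\balj)\}$ --- and applying it at $\baljs$ gives the Full-Rank Condition over $\fq$ while applying it at each $\balj$, $j\neq j^*$, gives the bandwidth $bm$; this is exactly the routine tower-of-fields bookkeeping the authors evidently considered too standard to write out.
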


Applying the above complexity reduction assumptions and techniques, we  construct low-bandwidth repair schemes for ISAL-codes and $\fst$-codes utilizing the following algorithms.
\begin{itemize}
	\item \textbf{Algorithm 1}: (degree-four repair) Introduced in~\cite{DuursmaDau2017} to tackle GRS(14,10) over $\fte$, this algorithm first constructs a list of pairs of polynomials in $\fql[x]$ (treating $\ff_{q^{\ell/2}}$ as the base field) that has bandwidth at most a threshold $\theta_2$, and then search for sets of four polynomials (treating $\ff_{q^{\ell/4}}$ as the base field) consisting of two pairs from that list (keeping the first pair unchanged while adding a multiplicative factor to the second) that has bandwidth at most $\theta_4$ $(\theta_4 < \theta_2)$. Various thresholds $\theta_2$ and $\theta_4$ were tested to produce the lowest bandwidth.  
	\item \textbf{Algorithm 2}: (exhaustive search) For $r = 2, 3$, we can also apply a direct exhaustive search for sets of polynomials with low-bandwidths. Note that we do not have to wait for the algorithm to finish (which would take too long). We can retrieve the currently found bandwidths for all codeword components and stop if find them satisfactory or see that there is little chance to improve further. 
\end{itemize}
Note that both algorithms go through each valid set of polynomials \textit{once} and check which codeword components could be repaired by the set. Algorithm 1 terminates if repair schemes of bandwidth not exceeding the specified threshold have been found for \textit{all} codeword components. Best found bandwidths for codes with $4\hspace{-2pt}\leq\hspace{-2pt} n\hspace{-2pt}\leq\hspace{-2pt} 16$ and $2\hspace{-2pt} \leq\hspace{-2pt} r\hspace{-2pt} \leq\hspace{-2pt} 4$ are reported in Table~\ref{tab:r=2}, Table~\ref{tab:r=3}, and Table~\ref{tab:r=4}. Column ``$\fst$-based algebraic'' refers to repair schemes using subspace polynomials and lifting~\cite{LiWangJafarkhani_TIT_2019,DauDinhKiahTranMilenkovic2021}.
$\fst$-based heuristic always finds the best bandwidths, which could also be due to the fact that it is cheaper to search over $\fst$ than $\fte$. 
We maintain a web page~\cite{webpage} to keep track of the best bandwidths and the corresponding repair schemes.  

\begin{table}
\centering
\begin{tabular} {|>{\centering\arraybackslash}p{1.4cm}|>{\centering\arraybackslash}p{0.9cm}|>{\centering\arraybackslash}p{1.5cm}|>{\centering\arraybackslash}p{1.5cm}|>{\centering\arraybackslash}p{1.5cm}|}
\hline

\textbf{Redundancy} $r=2$ & \textbf{Default} & \textbf{ISA-L heuristic} & $\mathbb{F}_{16}$-\textbf{based algebraic}&  $\mathbb{F}_{16}$-\textbf{based heuristic}\\

\hline
$n=4$ & 16 & 12 (-25\%) & 18 (+12.5\%)& 12 (-25\%) \\

\hline
$n=5$ & 24 & 18 (-25\%)& 24 (-0\%)& 18 (-25\%) \\

\hline
$n=6$ & 32 & 24 (-25\%) & 30 (-6.3\%)& 24 (-25\%) \\

\hline
$n=7$ & 40 & 32 (-20\%) & 36 (-10\%)& 30 (-25\%) \\

\hline
$n=8$ & 48 & 38 (-20.8\%) & 42 (-12.5\%)& 38 (-20.8\%) \\

\hline
$n=9$ & 56 & 44 (-21.4\%) & 48 (-14.3\%) & 44 (-21.4\%) \\

\hline
$n=10$ & 64  &	50(-21.9\%) & 54 (-15.6\%)& 50 (-21.9\%) \\

\hline
$n=11$ & 72	& 58 (-19.4\%) & 60 (-16.7\%)& 56 (-22.2\%) \\

\hline
$n=12$ & 80	& 64 (-20\%) & 66 (-17.5\%)& 64 (-20\%) \\

\hline
$n=13$ & 88	& 72 (-18.2\%) & 72 (-18.2\%)& 70 (-20.5\%) \\

\hline
$n=14$ & 96 & 80 (-16.7\%) & 78 (-18.8\%)& 76 (-20.8\%) \\

\hline
$n=15$ & 104 & 84 (-19.2\%) & 84 (-19.2\%)& 84 (-19.2\%) \\

\hline
$n=16$ & 112 & 90 (-19.6\%) & 90 (-19.6\%)& 90 (-19.6\%) \\

\hline
\end{tabular}
\caption{Current-best repair bandwidths (measured in bits) for Reed-Solomon codes with $4 \leq n \leq 16$ and $r = 2$.}
\label{tab:r=2}
\vspace{-10pt}
\end{table}

\begin{table}
\centering
\begin{tabular} {|>{\centering\arraybackslash}p{1.4cm}|>{\centering\arraybackslash}p{0.9cm}|>{\centering\arraybackslash}p{1.5cm}|>{\centering\arraybackslash}p{1.5cm}|>{\centering\arraybackslash}p{1.5cm}|}
\hline

\textbf{Redundancy} $r=3$ & \textbf{Default} & \textbf{ISA-L heuristic} & $\mathbb{F}_{16}$-\textbf{based algebraic}&  $\mathbb{F}_{16}$-\textbf{based heuristic}\\

\hline
$n=4$ & 8 & 8 (-0\%) & 18(+100\%)& 8 (-0\%) \\

\hline
$n=5$ & 16 & 12 (-25\%)	 & 24 (+50\%)& 12 (-25\%) \\

\hline
$n=6$ & 24	& 16 (-33.3\%) & 30 (+25\%)& 16 (-33.3\%) \\

\hline
$n=7$ & 32 & 22 (-31.3\%) & 36 (+12.5\%)& 22 (-31.3\%) \\

\hline
$n=8$ & 40	& 28 (-30\%) & 42 (+5\%)& 28 (-30\%) \\

\hline
$n=9$ & 48 & 34 (-29.2\%) & 48 (0\%)& 32 (-33.3\%) \\

\hline
$n=10$ & 56	& 40 (-28,6\%) & 54 (-3.6\%)& 40 (-28,6\%) \\

\hline
$n=11$ & 64	& 46 (-28.1\%) & 60 (-6.3\%)& 46 (-28.1\%) \\

\hline
$n=12$ & 72	& 52 (-27.8\%) & 66 (-8.3\%)& 52 (-27.8\%) \\

\hline
$n=13$ & 80	& 58 (-27.5\%) & 72 (-10\%)& 58 (-27.5\%) \\

\hline
$n=14$ & 88 & 66 (-25\%) & 78 (-11.4\%)& 64 (-27.3\%) \\

\hline
$n=15$ & 96	& 72 (-25\%) & 84(-12.5\%)& 70 (-27.1\%) \\

\hline
$n=16$ & 104 & 78 (-25\%) & 90 (-13.5\%)& 76 (-26.9\%) \\

\hline
\end{tabular}

\caption{Current-best repair bandwidths (measured in bits) for Reed-Solomon codes with $4 \leq n \leq 16$ and $r = 3$.}
\label{tab:r=3}
\end{table}

\begin{table}
\centering
\begin{tabular} {|>{\centering\arraybackslash}p{1.4cm}|>{\centering\arraybackslash}p{0.9cm}|>{\centering\arraybackslash}p{1.5cm}|>{\centering\arraybackslash}p{1.5cm}|>{\centering\arraybackslash}p{1.5cm}|}
\hline

\textbf{Redundancy} $r=4$ & \textbf{Default} & \textbf{ISA-L heuristic} & $\mathbb{F}_{16}$-\textbf{based algebraic}&  $\mathbb{F}_{16}$-\textbf{based heuristic}\\

\hline
$n=5$ & 8 & 8 (-0\%)& 16 (+100\%)& 8 (-0\%) \\

\hline
$n=6$ &	16 & 12 (-25\%)& 20 (+25\%)& 12 (-25\%) \\

\hline
$n=7$ & 24 & 16 (-33.3\%)& 24 (0\%)& 16 (-33.3\%) \\

\hline
$n=8$ & 32 & 22 (-31.3\%)& 28 (-12.5\%)& 	22 (-31.3\%) \\

\hline
$n=9$ & 40	& 28 (-30\%)& 32 (-20\%)& 26 (-35\%) \\

\hline
$n=10$ & 48	& 36 (-25\%)& 36 (-25\%)& 32 (-33.3\%) \\

\hline
$n=11$ & 56	& 42 (-25\%)& 40 (-28.6\%)& 38 (-32.1\%) \\

\hline
$n=12$ & 64	& 48 (-25\%)& 44 (-31.3\%)& 44 (-31.3\%) \\

\hline
$n=13$ & 72	& 54 (-25\%)& 48 (-33.3\%)& 48 (-33.3\%) \\

\hline
$n=14$ & 80	& 62 (-22.5\%)& 52 (-35\%)& 52 (-35\%) \\

\hline
$n=15$ & 88	& 68 (-22.7\%)& 56 (-36.4\%)& 56 (-36.4\%) \\

\hline
$n=16$ & 96	& 60 (-37.5\%)& 60 (-37.5\%)& 60 (-37.5\%) \\

\hline
\end{tabular}
\caption{Current-best repair bandwidths (measured in bits) for Reed-Solomon codes with $5 \leq n \leq 16$ and $r = 4$.}
\label{tab:r=4}
\vspace{-10pt}
\end{table}

\begin{definition}[Bandwidth profile]
Given $n>0$ and $\ell > 0$, a \textit{bandwidth profile} $\vb=(b_1,b_2,\ldots,b_n)$, $b_j \in [\ell]$, is \textit{feasible} for an $\rsk$, where $A \subseteq \fql$, $|A|=n$, if there exists a collection of $n$ repair schemes that require bandwidth $b_j$ for the $j$-th components of its codeword, $j\in [n]$. 
A bandwidth profile~$\vbs$ is \textit{optimal} if for every $j \in [\ell]$, $b^*_j$ is the lowest bandwidth possible to (linearly) repair the $j$-th codeword component of the code.  
\end{definition}

Note that in Table~\ref{tab:r=2}, Table~\ref{tab:r=3}, and Table~\ref{tab:r=4}, we report $\mvb \triangleq \max\{b_j\colon j \in [\ell]\}$, where $\vb$ is a feasible bandwidth profile.  
Individual components may require lower bandwidths.
Visit our web page~\cite{webpage} for the most updated bandwidths.

\subsection{The Impact of Evaluation Points}
\label{subsec:evalPoints}

In this section we discuss important issues regarding the selection of the evaluation points in Reed-Solomon codes. In particular, we demonstrate via an example that different sets of evaluation points may lead to different repair bandwidths.

\begin{proposition}
\label{pro:A}
Translating and dilating the set of evaluation points of a Reed-Solomon code do not affect its optimal bandwidth profile. In other words, $\text{RS}(A,k)$ and $\text{RS}(\bbe A+\bga,k)$, where $\bbe, \bga \in \fql$, $\bbe \neq 0$, have the same optimal bandwidth profile (up to a permutation).
\end{proposition}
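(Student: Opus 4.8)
\emph{Proof proposal.} The plan is to produce an explicit degree-preserving $\fql$-linear bijection on the space of candidate check polynomials that carries any repair scheme for $\rsk$ to a repair scheme for $\text{RS}(\bbe A+\bga,k)$ with exactly the same bandwidth, and then to invoke symmetry.

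First I would recall, from Section~\ref{subsec:tracerepair}, that a repair scheme for the $j^*$-th component of $\rsk$ is --- after discarding the irrelevant multipliers $\bla_j$ --- a family $\{g_i(x)\}_{i\in[\ell]}\subset\fql[x]$ with $\deg g_i\le r-1$ satisfying $\rank\{g_i(\baljs)\colon i\in[\ell]\}=\ell$, and that its bandwidth is $\sum_{j\neq j^*}\rank\{g_i(\balj)\colon i\in[\ell]\}$. Consequently the optimal bandwidth of component $j^*$ is determined purely by the collection of evaluation vectors $\big((g_i(\bal))_{i\in[\ell]}\big)_{\bal\in A}$ as $\{g_i\}$ ranges over all such families; the multipliers may be ignored because $\rank\{\bla_j g_i(\balj)\}=\rank\{g_i(\balj)\}$ whenever $\bla_j\neq 0$, as already observed there.

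Next I would set up the change of variables. Since $\bbe\neq 0$, the affine map $x\mapsto\bbe x+\bga$ is a bijection of $\fql$, so $A'\triangleq\bbe A+\bga$ consists of $n$ distinct points and $\text{RS}(A',k)$ is a genuine $[n,k]$ Reed--Solomon code with $j$-th evaluation point $\balj'=\bbe\balj+\bga$. Define $\phi\colon\fql[x]\to\fql[x]$ by $\phi(g)(x)=g\!\left(\tfrac{x-\bga}{\bbe}\right)$. This $\phi$ is $\fql$-linear, degree-preserving, invertible (its inverse is $h(x)\mapsto h(\bbe x+\bga)$), hence restricts to a bijection of $\{g\colon\deg g\le r-1\}$, and it satisfies the key identity $\phi(g)(\balj')=g(\balj)$ for all $j\in[n]$. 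Given any repair scheme $\{g_i\}_{i\in[\ell]}$ for $\bcjs$ of $\rsk$, the family $\{\phi(g_i)\}_{i\in[\ell]}$ then evaluates at each $\balj'$ to exactly the set $\{g_i(\balj)\colon i\in[\ell]\}$, so it satisfies the Full-Rank Condition at $j^*$ and has the same per-node ranks, hence the same bandwidth; this gives $b^*_j$ for $\text{RS}(A',k)$ at most $b^*_j$ for $\rsk$, for every $j$. Applying the identical argument to the inverse map $x\mapsto\bbe^{-1}x-\bbe^{-1}\bga$ (again a translation and dilation) yields the reverse inequality, so the two optimal bandwidth profiles coincide under the correspondence $\balj\leftrightarrow\balj'$, i.e.\ up to the permutation induced by the chosen orderings of $A$ and $A'$.

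The argument involves no real obstacle --- it is essentially bookkeeping around the identity $\phi(g)(\bbe\bal+\bga)=g(\bal)$. The only points deserving care are that the GRS multiplier vector changes when $A$ is replaced by $A'$ (harmless, by the rank remark above, since both duals are generalized Reed--Solomon codes on the respective point sets) and that the evaluation points carry an arbitrary ordering, which is why the statement includes the ``up to a permutation'' qualifier.
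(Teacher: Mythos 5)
Your proposal is correct and follows essentially the same route as the paper's proof: the paper also sets $h_i(x) = g_i\big((x-\bga)/\bbe\big)$, uses the identity $h_i(\bbe\bal+\bga)=g_i(\bal)$ to transfer a repair scheme with unchanged bandwidth, and obtains the reverse inequality by writing $A = \bbe^{-1}(\bbe A + \bga) - \bbe^{-1}\bga$. Your additional remarks on degree preservation and the irrelevance of the GRS multipliers are sound elaborations of points the paper leaves implicit.
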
  
\begin{proof}
Suppose that $\{g_i(x)\}_{i\in [\ell]}$ is a repair scheme for $\rsk$, which can be used to repair $f(\balj)$, $\balj \in A$, and has bandwidth~$b$. We set $h_i(x) \triangleq g_i\left((x-\bga)/\bbe\right)$, $i\in [\ell]$. Then $h_i(\bbe\bal+\bga) = g_i(\bal)$ for every $\bal \in A$. Therefore, $\{h_i(x)\}_{i\in [\ell]}$ can be used to repair $\bbe\balj+\bga\in \bbe A+\bga$ with bandwidth~$b$.
Hence, $\bbe\bal+\bga\in \bbe A+\bga$ can be repaired in $\text{RS}(\bbe A+\bga,k)$ with a bandwidth not exceeding that for $\bal\in A$ in $\rsk$.
Since $A = \bbe^{-1}(\bbe A+\bga)-\bbe^{-1}\bga$, the same argument proves that the reverse conclusion is also true. Thus, these two codes have the same optimal repair bandwidth for their corresponding codeword components ($f(\bal)$ and $f(\bbe\bal+\bga)$).
\end{proof} 

As a corollary of Proposition~\ref{pro:A}, although there are $\binom{q^\ell}{n}$ different RS$(n,k)$ codes over $\fql$, we can divide them into classes of codes that have evaluation points obtained from each other by translations and dilations. Each class can have at most $q^\ell(q^\ell-1)$ members with the same optimal bandwidth profile (up to permutations). Identifying other transformations of $A$ that preserve the optimal bandwidth profile is an open problem.
 

To examine the impact of evaluation points on repair bandwidths, we consider RS(5,3) codes over $\fst$, which have small $r$ and field size and hence allow us to determine their optimal bandwidth profiles. An RS(5,3) (with a given generator matrix) was first investigated in~\cite{Shanmugam2014}.
We searched through all 524,160 different arrangements of five elements in $\fst$ and found 240 different $A$'s (orders of elements are important) giving rise to the same code, e.g., $A =\{0, 1, \bz_{16}^{14}, \bz_{16}^8, \bz_{16}^{12}\}$,  
where 
$\bz_{16}$ is a primitive element of $\fst$ satisfying $\bz_{16}^4+\bz_{16}+1=0$.
It was shown in~\cite{Shanmugam2014} that the optimal repair bandwidth for each systematic component $(j=1,2,3)$ is 10 bits.
On the other hand, from Table~\ref{tab:r=2} and~\cite{webpage}, we know that 9 bits are sufficient to repair other RS(5,3) codes.
This shows that different sets of evaluation points can lead to different repair bandwidths.   

In fact, we have obtained a complete picture of the bandwidth profiles of \textit{all} RS$(n,n-2)$ codes, $4 \leq n \leq 16$, over $\fst$.
First, from the list of $16$ \textit{monic} polynomials of degree one over $\fst$, we created a list of $6,142,500=\binom{16}{4}15^3$ sets of four check polynomials (keeping the first monic while adding arbitrary nonzero coefficients to others).
We then generate a \textit{rank profile} $\vec{{\sf r}} = (r_1,r_2,\ldots,r_{16})$ for each polynomials set and remove those whose components are all smaller than 4, which took our GAP program 10 hours to complete. More than six millions sets of polynomials remain as potential repair schemes.
For each $A \subseteq \fst$, we determine the optimal repair bandwidth for each codeword component of $\rsk$ by going through all the rank profiles, restricting to the positions in $A$. For instance, when $n = 5$ and $k = 3$, among $4368=\binom{16}{5}$ different 5-subsets $A\subset\fst$, 2880 have bandwidth profile $(9,9,9,9,8)$, 1440 have bandwidth profile $(9,9,9,9,9)$, while only 48 have bandwidth profile $(10,10,10,10,10)$. It turns out that the RS(5,3) examined in~\cite{Shanmugam2014} is accidentally among the minority that require 10 bits. Most others need only 9 bits.

\section{An Implementation of Trace Repair Schemes in C}
\label{sec:implementation}

We implemented a trace repair scheme~\cite{implementation} on top of the existing C code in the ISA-L~\cite{ISAL}. As the bandwidth reduction is known, we focus on minimizing the computational complexity. Following our notation in the C code, Node $j$ aims to recovers $\bcj$ by downloading relevant data from Nodes $i$, $i\in I \subseteq [n]\setminus \{j\}$. In ISA-L $|I|=k$ while in our implementation $|I|=n-1$. The code implemented in ISA-L is systematic, i.e., the first $k$ components are data and the last $n-k$ components of each codeword are parities.  
Finite field elements over $\fte$ are represented as integers in $[0,255]$ (see Section~\ref{sec:pre}). 

\subsection{ISA-L Implementation} 
ISA-L uses the naive scheme to repair the generalized Reed-Solomon codes, which means that each helper Node $i$ simply reads and sends $\bci$ without performing any computation. Node~$j$, after receiving data from $k$ nodes, can recover $\bcj$ as follows. A lost parity component can be repaired by downloading the data components (systematic part) and performing encoding. A missing data component, however, requires more work: Node $j$ performs first a matrix inversion to obtain $(\bG[I])^{-1}$, where $\bG[I]$ denotes the submatrix of $\bG$ consisting of columns of $\bG$ indexed by $I$, and then multiply this matrix with the vector consisting of $\bci, i\in I$. This procedure is repeated for $T$ different codewords, where $T$ is a large number representing the number of codewords to be repaired. 
For instance, if the encoded file has size 60 MB, and an RS(9,6) is used, then $T$ is roughly ten million.
Note that the matrix inversion is done only once. 

\subsection{Our Implementation}
In a trace repair scheme, as opposed to ISA-L, both senders and receiver perform computations. As a large number of codewords are being repaired, say, millions, it is crucial to identify parts of the computation that could be precomputed and stored for fast access. We convert the $n$ repair schemes (for $n$ components) into three lookup tables: $H$ (helper) allows the helper nodes to create the repair traces (bits) while $R$ (recover) and $D$ (dual basis) allow the receiver node to process the repair traces and recover the lost component. Please refer to~\cite{DauDinhKiahTranMilenkovic2021} for undefined terminologies.
A frequently used operation is the XOR-sum of the bits of an integer.
Hence, we precompute an array called $\sp$ that store these values for all $m\in [0,255]$.
We describe below the computation for one codeword. 


\textbf{Sender side.} Node $i$ extracts the number of traces to be sent $r_i = H[i][j][0]$, and uses $r_i$ numbers $H[i][j][1]$ to $H[i][j][r_i]$ to compute $r_i$ repair traces. Table $H$ is defined in a way that the $s$-th trace from Node $i$ is the inner product of $H[i][j][s]$ and $\bci$, 
\[
\srt_i[s]=\sp\big[H[i][j][s]\ \&\ \bci\big],\quad s \in [1,r_i].\vspace{-3pt}
\] 
Node~$i$ then sends $\srt_i$ to Node~$j$. 

\textbf{Receiver side.} Node $j$ uses $\srt_i$, $i \in [n]\setminus \{j\}$, $R$, and $D$ to recover $\bcj$ as follows. For each $i$, it generates eight column traces $\sct_i[s]$, $s\in [8]$, using the formula\vspace{-3pt}
\[
\sct_i[s] = \sp\big[R[i][j][s]\ \& \ {\sf{Dec}}(\srt_i)\big],\vspace{-3pt}
\]
where ${\sf{Dec}}(\srt_i)$ turns the $r_i$ bits in $\srt_i$ into a decimal number, ready for bitwise operations. ${\sf{Dec}()}$ is also implemented using bit-shift and XOR operations. 
Finally,\vspace{-3pt}
\[
\bcj = \oplus_{s=1}^8 \big( \big(\oplus_{i \in [n]\setminus \{j\}}\sct_i[s]\big)\times D[s]\big).
\] 

\begin{table}[t]
\centering
\begin{tabular}{|l|c|c|c|}
\hline
		& Senders 	& Receiver 	& Total\\
\hline
ISA-L (naive) 	& 0	(sec)		& 0.57 (sec) 		& 0.57 (sec)\\
\hline
Trace repair &0.2 (sec)	&1.4 (sec)		& 1.6 (sec)\\
\hline
\end{tabular}
\caption{The running times of the naive repair (ISA-L) and trace repair for an RS(9,6) over \textit{ten millions} codewords and (random) single erasures (on a Linux server: Intel(R) Xeon(R) CPU E5-2690 v2 @ 3.00GHz with 792 GB RAM). For senders, the maximum running time among all helpers was used.}\vspace{-15pt}
\label{tab:comparison}
\end{table}

In our implementation~\cite{implementation}, we further optimize the code by joining the above steps at the receiver to save time.
For RS(9,6), our implementation is 2.8x slower than ISA-L (Table~\ref{tab:comparison}). For other codes such as RS(11,8), RS(16,13), RS(12,8), RS(14,10), RS(16,12), the gap is 1.8x-2.4x. 
Despite being a negative result, the small gaps are encouraging because trace repair is inherently more complicated.
Further optimizations may also reduce the gaps. 
As shown in \cite[Fig.~1]{MitraPantaRaBagchi2016}, the network transfer time is much larger than the computation time (say, 40 times) in naive repair. Hence, the reduction in bandwidth can well compensate the computation time and make trace repair faster. For example, a combination of 33\% reduction in bandwidth and 200\% increase in computation time could still lead to a 1.25x speed-up compared to naive repair. Implementing trace repairs on Hadoop 3 to verify this observation is a future work.
\vspace{-5pt}

\section*{Acknowledgement}

This work has been supported by the 210124 ARC DECRA Grant DE180100768. We thank Nguyen Dinh Quang Minh and Dau Trung Dung for their help in the implementation.
\newpage
\bibliographystyle{IEEEtran}
\bibliography{PracticalRS}
\end{document}